\numberwithin{table}{section} 
\numberwithin{equation}{section} 
\definecolor{IgorGreen}{RGB}{76,153,0}
\definecolor{IgorBlue}{RGB}{0,0,204}
\definecolor{IgorGreen1}{RGB}{0,153,77}
\definecolor{IgorPurple}{RGB}{103,0,204}
\newtheorem{theorem}{Theorem}
\newenvironment{proof}[1][Proof]{\noindent\textbf{#1.} }{\ \rule{0.5em}{0.5em}}
\begin{document}
\title{SIS Epidemic Spreading with Heterogeneous Infection Rates}

\author[1]{Bo Qu}      
\author[1]{Huijuan Wang}
\date{}
\affil[1]{Delft University of Technology, Delft, Netherlands}
\maketitle

\begin{abstract}
In this work, we aim to understand the influence of the heterogeneity of infection rates on the Susceptible-Infected-Susceptible (SIS) epidemic spreading. Employing the classic SIS model as the benchmark, we study the influence of the independently identically distributed infection rates on the average fraction of infected nodes in the metastable state. The log-normal, gamma and a newly designed distributions are considered for infection rates. We find that, when the recovery rate is small, i.e.\ the epidemic spreads out in both homogeneous and heterogeneous cases: 1) the heterogeneity of infection rates on average retards the virus spreading, and 2) a larger even-order moment of the infection rates leads to a smaller average fraction of infected nodes, but the odd-order moments contribute in the opposite way; when the recovery rate is large, i.e.\ the epidemic may die out or infect a small fraction of the population, the heterogeneity of infection rates may enhance the probability that the epidemic spreads out. Finally, we verify our conclusions via real-world networks with their heterogeneous infection rates. Our results suggest that, in reality the epidemic spread may not be so severe as the classic SIS model indicates, but to eliminate the epidemic is probably more difficult.  
\end{abstract}

\section{Introduction}
{T}{he} studies on contagion processes in networks are strongly motivated and justified by the anticipated outbreaks of epidemic diseases in a population and non-stop threats of cyber security in computer networks \cite{albert2000error,kephart1991directed,garetto2003modeling,ganesh2005effect,majdandzic2014spontaneous}. The Susceptible-infected-susceptible (SIS) model \cite{daley2001epidemic,van2011n,cator2013susceptible,li2012susceptible,boccara1993critical,shi2008sis,wang2013effect} is one of the most widely used models to describe such processes. In the continuous-time Markovian SIS model, a node is either infected or susceptible at any time $t$. Each infected node infects each of its susceptible neighbors with an infection rate $\beta$. The infected node can be recovered with a recovery rate $\delta$. Both infection and recovery processes are independent Poisson processes. The average fraction $y_\infty$ of the infected nodes in the metastable state, ranging in $[0,1]$, indicates how severe the influence of the virus is: the larger $y_\infty$ is, the more severely the network is infected. 

The classic SIS model assumes that the infection rate $\beta$ is the same for all infected-susceptible node pairs and so is the recovery rate $\delta$ for all nodes. Most studies are focusing on the relationship between the effective infection rate $\tau$  and the average fraction $y_\infty$ of infected nodes or the epidemic threshold in the virus contamination process with homogeneous infection (recovery) rates. However, in reality, neither the contact frequency \cite{Fratiglioni20001315} between a pair of individuals in social networks nor the connecting frequency between a pair of nodes in computer networks is constant. Infection rates can be different from pairs to pairs, thus heterogeneous. Many studies on real diseases, such as SARS \cite{WANGWenBin:2143} and Plasmodium falciparum infection \cite{smith2005entomological} also reveal the heterogeneity of infection rates. Furthermore, Smith et al.\ \cite{smith2005entomological} suggest that the distribution of infection rates in different populations may be varied as well, and Wang et al.\ \cite{WANGWenBin:2143} find that infection rates with the log-normal distribution fit best the data of SARS in 2003 by applying their model.  

In this paper, we explore the effect of heterogeneous infection rates on the average fraction $y_\infty$ of infected nodes in a systematic way. We propose a SIS model, in a network with $N$ nodes, with the homogeneous recovery rate $\delta$ but heterogeneous infection rates $\beta_{ij}$ ($=\beta_{ji}$, $i=1,2,...,N$, $j=1,2,...,N$ and $i\neq j$) between node $i$ and node $j$. Similar to the classic homogeneous SIS, our SIS model with heterogeneous infection rates is as well a Markovian process where the time for an infected node $i$ to infect each of its susceptible neighbors $j$ is an independent exponential random variable with average $\beta_{ij}^{-1}$. The homogeneous SIS model has the same infection rate $\beta$ for all node pairs whereas all the infection rates in our heterogeneous SIS are independent and identically distributed (i.i.d.) random variables. We study how the distribution of infection rates influences the average fraction $y_\infty$ of infected nodes in the metastable state. 

A few recent papers \cite{qu2014heterogeneous,preciado2013optimal,preciado2014optimal, fu2008epidemic,buono2013slow,yang2012epidemic} have taken into account either the heterogeneous infection or recovery rates. In \cite{qu2014heterogeneous}, we explored the influence of degree-based recovery rates on the average fraction of infected nodes in the metastable state. Preciado et al.\ \cite{preciado2013optimal, preciado2014optimal} discussed how to choose the infection and recovery rates from given discrete sets to let the virus die out. Fu et al.\ \cite{fu2008epidemic} studied the epidemic threshold when the infection rates depend on the node degrees and Buono et al.\ \cite{buono2013slow} considered a specific distribution of infection rates and observed slow epidemic extinction phenomenon. Yang and Zhou \cite{yang2012epidemic} gave an edge-based mean-field solution of the epidemic threshold in regular networks (the degrees of all nodes are the same) with i.i.d.\ heterogeneous infection rates (following uniform or power-law distribution). 

In this paper, we explore the influence of heterogeneous infection rates on the epidemic spreading. In practice, the number of new infections in a period of time can be used to estimate the infection rate, for example, \cite{riley2003transmission} counts the number of infected people per time interval (daily, weekly, etc.) to indicate the infection rate; \cite{cook2007estimation}, illustrating a strategy to estimate the time-varying transmission rates for the spread of infection, also takes into account the daily distribution of new infections. Besides the number of new infections, the interacting frequencies between two neighboring nodes have also been employed to estimate the infection rate, for example, the infection rate has been considered to be proportional to the interacting frequency. The average infection rate obtained in both scenarios has been used as the infection rate in the homogeneous epidemic model. Our work points out how such assumption of homogeneous rates would differ from real-world heterogeneous infection rates with respect to their influence on the fraction of infected population. We consider several representative distributions with the same mean but higher moments tunable, since the influence of the mean\footnote{The infection rate between any pair of nodes equals to the mean in the homogeneous SIS model.} has been widely studied in the homogeneous SIS model \cite{pastor2001epidemic,pastor2005epidemics, pastor2014epidemic,guo2013epidemic,li2013epidemic}. To our best knowledge, our work is the first to discuss the influence of higher moments of the infection rate distribution in epidemic models. 

\section{SIS model with heterogeneous infection rates}
In this section, we introduce the classic SIS model, basic network models, the heterogeneous infection rates and the simulation settings of the SIS model with heterogeneous infection rates on a network.  
\subsection{The classic SIS model}
In the continuous-time Markovian SIS model on a network with $N$ nodes, the state of a node at any time $t$ is a Bernoulli random variable, where $X_i(t)=0$ represents that node $i$ is susceptible and $X_i(t)=1$ that node $i$ is infected. Each infected node infects each of its susceptible neighbors with an infection rate $\beta$. The infected node can recover with a recovery rate $\delta$. Both infection and recovery processes are independent Poisson processes. The ratio $\tau=\frac{\beta}{\delta}$ is called the effective infection rate. For each effective infection rate $\tau$, the infection process dies out in any finite network after a long enough time, and the corresponding steady state is the absorbing state: \textit{i.e.}\ the overall healthy state. However, if the effective infection rate $\tau$ is larger than the epidemic threshold $\tau_c$, the epidemic spreads out and there is a non-trivial metastable state, where the average fraction $y_\infty$ of infected nodes is non-zero and stable during a long time \cite{van2013homogeneous}. The average fraction $y_\infty$ of infected nodes indicates the severity of the overall infection.

\subsection{Network models}
Among various network models, Erd\"os-R\'enyi (ER) model\cite{erdds1959random} is one of the most widely-used and well-studied models. In an ER random network with $N$ nodes, each pair of nodes are connected with probability $p$ independent from every other pair, thus the distribution of the degree of a random node is binomial: $Pr[D=k]=\binom{N-1}{k}p^k(1-p)^{N-1-k}$ and the average degree $E[D]=(N-1)p$. For a large $N$ and constant average degree, the degree distribution is Poisson: $Pr[D=k]=e^{-Np}(Np) ^{k}/k!$. 

Besides the ER model, the network model with a scale-free degree distribution (SF model) has always been used to describe real-world networks such as the Internet\cite{caldarelli2000fractal} and World Wide Web\cite{albert1999internet}. The degree distribution of SF
networks is given by $Pr[D=k]\scriptsize{\sim}k^{-\lambda}, k\in[d_{\rm
    min},d_{\rm max}]$, where $d_{\rm min}$ is the smallest degree,
$d_{\rm max}$ is the degree cutoff, and $\lambda$ is the exponent
characterizing the broadness of the distribution
\cite{barabasi1999emergence}. In real-word networks, the exponent $\lambda$ is usually in the range $[2,3]$, thus we confine the exponent $\lambda=2.5$ in this paper. We further employ the smallest degree $d_{\rm min}=2$, the natural degree cutoff $d_{\rm max}=\lfloor N^{1/(\lambda-1)}\rfloor$ \cite{PhysRevLett.85.4626} , and the size $N=10^4$. Hence, the average degree is approximately $4$. As the comparison, we consider the ER networks with the size $N=10^4$ and the average degree $E[D]=4$.

\subsection{Heterogeneous infection rates}
\label{Chap:Hbeta}
In this subsection, we introduce three distributions of the heterogeneous infection rates. We aim to explore how the heterogeneous infection rates influence the spread of SIS epidemics, particularly we study the relationship between the variance\footnote{The variance of a random variable is the second central moment.} (and even higher moments) of the heterogeneous infection rates and the average fraction $y_\infty$ of infected nodes. Hence, we would like to choose infection-rate distributions systematically such that they cover a broad range of distributions including those observed in real-world and importantly their higher order moments, at least the variances are tunable when their means are fixed. 

The $nth$ moment $m_n$ of a distribution with the probability density function (PDF) $f_B(\beta)$ is $m_n=\int\limits_{-\infty}^{+\infty}\beta^nf_B(\beta)d\beta$. Thus, the first moment $m_1$ is just the mean and the relationship between the second moment $m_2$ and variance $Var[B]$ is $Var[B]=m_2-m_1^2$, where the random variable $B$ is the infection rate of a link. To eliminate the influence of the mean $m_1$, we further define the $nth$ normalized moment $\nu_n=\frac{m_n}{m_1^n}$, then $\nu_1=1$ and the normalized variance $v=\nu_2-1$. 

We choose two asymmetric distributions: the log-normal and gamma distribution, of which we can keep the means unchanged and tune the variances in a large range. The log-normal distribution \cite{van2006performance} $B\sim Log\textrm{-}\mathcal{N}(\beta;\mu,\sigma)$, of which the PDF is, for $\beta>0$
\[
f_B(\beta; \mu, \sigma)=\frac{1}{\beta\sigma \sqrt{2\pi}}exp\left(-\frac{(\ln \beta-\mu)^2}{(2\sigma^2)}\right)
\]
and the $nth$ normalized moment is $\nu_n=exp(\frac{(n^2-n)\sigma^2}{2})$, has a power-law tail for a large range of $\beta$ provided $\sigma$ is sufficiently large. The log-normal distribution has as well been widely observed in real-world, where the interaction frequency between nodes is usually considered as the infection rate between those nodes. One example is the infection rates of the co-author network, as illustrated in Fig.~\ref{fig:52}, Section 5. Moreover, Wang et al.~\cite{WANGWenBin:2143} find that by employing the log-normal distributed infection rates, their epidemic model can accurately fit the infection data of 2003 SARS.

The gamma distribution $B\sim \Gamma(\beta;k,\theta)$, of which the PDF is, for $\beta>0$
\[
f_B(\beta;k,\theta)=exp(-\frac\beta\theta)\frac{\beta^{k-1}}{\theta^{k}\Gamma(k)}
\] 
($\Gamma(k)=\int\limits_{0}^{\infty}t^{k-1}e^{-t}dt$) and the $nth$ normalized moment is $\prod\limits_{i=0}^{n-1}(1+ik^{-1})$, has a lighter tail than the log-normal distribution. The Airline network, as demonstrated in Fig.~\ref{fig:51}, has an exponentially distributed infection rates, which corresponds to the Gamma distribution when $k=1$. 

In order to take into account symmetrically distributed infection rates as well, we design a variance-tunable and symmetric distribution other than the two asymmetric distributions above. We call it the symmetric polynomial (SP) distribution $B\sim SP(\beta;a,b)$, whose PDF is
\[
f_B(\beta;a,b)=\frac{b(a+1)}{2}|\beta-1|^a
\]
where $\beta\in [1-\frac{1}{\sqrt{b}},1+\frac{1}{\sqrt{b}}])$ and, $a=1$ and $b\in [1,+\infty)$ or $b=1$ and $a\in [1,+\infty)$. The mean of the distribution is $1$, the variance is $\frac{a+1}{b(a+3)}$.
Compared to the commonly-used uniform distribution (also symmetric and variance-tunable) with the same mean, the SP distribution can be tuned in a larger range of the variance.  

\subsection{The simulations}
In order to study the effect of the variance of the heterogeneous infection rates on the virus spread, we perform simulations to obtain the fraction $y_\infty$ of infected nodes as a function of the normalized variance $v$ of infection rates on both ER and SF networks. We find that, for commonly used 2-parameter distributions (such as the uniform distribution, log-normal distribution, gamma distribution, etc.), the scaling on the mean of infection rates can be eliminated by the same scaling on the recovery rate if we keep the normalized variance $v$ unchanged. This conclusion is also consistent with the fact that only the effective infection rate $\frac\beta\delta$ matters for the epidemic spreading, but not the infection rate $\beta$ in the homogeneous SIS model. Hence, without loss of generality,  we set the mean $m_1$ of the infection rates to $1$, thus all the normalized moments $\nu_n$ equal to the unnormalized ones $m_n$. Instead of performing discrete-time simulations, we further develop a continuous-time simulator, which was firstly proposed by van de Bovenkamp and described in detail in \cite{li2012susceptible}. A discrete-time simulation could well approximate a 
continuous process if a small time bin to sample the continuous process is selected so that within each time bin, no multiple events occur. A heterogeneous SIS model allows different as well large infection or recovery rates, which requires even smaller time bin size and challenges the precision of a discrete-time simulation. Hence, we implement the precise continuous-time simulations. 

\section{Small recovery rates} 
\label{smallRecov}
In this work, the average of the heterogeneous infection rates and the homogeneous infect rate are the same. Since the recovery rate $\delta$ plays the key role in the epidemic spreading, we discuss our results according to different ranges of the recovery rates. In this section, we introduce our main results about how the heterogeneous infection rates influence the contagion processes of epidemic, when the recovery rates are small such that the epidemic spreads out in both homogeneous and heterogeneous cases. In the next section, we focus on large recovery rates -- the homogeneous effective infection rate $\tau$ is close to the epidemic threshold $\tau_c$ in the classic model, where the epidemic with homogeneous infection rates may die out.       
\subsection{The observations}
\label{sec:observation}
We first show the simulation results when the variance of the infection rates is smaller than $1$, since the variance of a non-negative and symmetric distribution cannot be larger than the square of its mean\footnote{For any random variable $B$ following a non-negative and symmetric distribution $f_B(\beta)$ with mean $m_1$, the smallest and largest value that $B$ can reach is $0$ and $2m_1$ respectively, so the largest variance, which equals to $m_1^2$, can be reached when $Pr[B=0]=Pr[B=2m_1]=0.5$. }, thus $1$ in this paper. 
\begin{figure}[htbp]
\centering
\includegraphics[scale=.3]{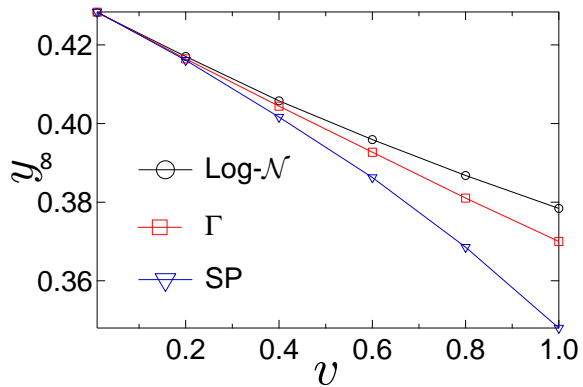}
\caption{The average fraction $y_\infty$ of infected nodes as a function of the normalized variance $v$ of infection rates for log-normal ($\circ$), gamma ($\textcolor{red}{\square}$), and SP (($\textcolor{IgorBlue}{\triangledown}$) infection-rate distributions respectively, and the recovery rate $\delta=2$. We consider ER networks with average degree $E[D]=4$ and network size $N=10^4$. The results are averaged over $1000$ realizations.}
\label{fig:HSIS-1}
\end{figure}
 
In Fig.~\ref{fig:HSIS-1}, we find that the average fraction $y_\infty$ of infected nodes decreases as the variance $v$ of the infection rates increases, no matter which distribution the infection rates follow. Moreover, the comparison of the decay of the three curves in Fig.~\ref{fig:HSIS-1} also suggests that, the smaller the third moment\footnote{The third moment of the log-normal, gamma and SP distribution is $(v+1)^3$, $(v+1)(2v+1)$ and $3v+1$ respectively.} of the infection rate distribution is, the faster $y_\infty$ decays as the variance increases. 

When the variance $v$ is larger than $1$, the infection rates cannot be symmetrically distributed. We thus discuss only the log-normal and gamma distributions which are representative among the heavy-tailed distributions and widely used in the real-world analysis.

\begin{figure}[!t]
\centering
\includegraphics[scale=.3]{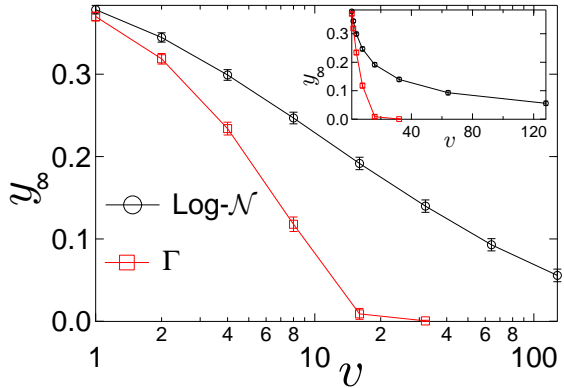}
\caption{The average fraction $y_{\infty}$ of infected nodes as a function of the variance $v$ of infection rates following different distributions: log-normal ($\circ$) and gamma ($\textcolor{red}{\square}$), and the recovery rate $\delta=2$. The simulations are on ER networks with average degree $E[D]=4$ and network size $N=10^4$. The results are averaged over $1000$ realizations, and the error bars are the standard deviations of the results in different realizations. The inset is the same as the main graph, but in a linear-linear scale.}
\label{fig:HSIS-2}
\end{figure} 
 
In Fig.~\ref{fig:HSIS-2}, we observe the same as in Fig.~\ref{fig:HSIS-1}. Moreover, we find that the average fraction $y_{\infty}$ of infected nodes decays much faster when infection rates follow gamma distributions than log-normal distributions. 

Here we only show the simulation results on ER random networks with $10^4$ nodes and average degree $E[D]=4$, because simulation results on SF networks lead to the same observations as illustrated in the Appendix. Moreover, though not shown in this paper, we have also done the simulations with various values of the recovery rate, such as $\delta=0.1$, $0.2$, $1$, etc., for both ER and SF networks and the conclusions are consistent. 

\subsection{The influence of the moments of the infection rates}
To explain our observations, we consider a susceptible node and an infected node interconnected by a link. The probability $\rho(T)$ that the infected node infects the susceptible neighbor in an arbitrary period $T$, is $\rho (T)=\int\limits_{0}^{+\infty}f_B(\beta)F(T;\beta)\,\mathrm{d}\beta$, where $f_B(\beta)$ is the PDF of the infection rate, and $F(T;\beta)$ is the probability that infection occurs between the neighboring infected and susceptible node pair within the time interval $T$ when the infection rate is $\beta$. Since the infection between any infected and susceptible node pair is an independent Poisson process, the time for an infected node to infect a susceptible neighbor is an exponential variable, \textit{i.e.}\ $F(T;\beta)=1-e^{-T\beta}$. We consider further the classic homogeneous SIS model, whose infection rate is equal to the average infection rate $E[B]$ in our heterogeneous SIS model. The counterpart of $\rho(T)$ in the homogeneous SIS model is, then, $\rho^*(T)=F(T;E[B])$. 

\begin{theorem}
If $f_B(\beta)$ is the probability density function of a non-negative continuous random variable $B$, and $F(T;\beta)$ is the distribution function of an exponential random variable with the rate parameter $\beta$, then for any $T>0$, we have
\[
\int\limits_{0}^{\infty}f_B(\beta)F(T;\beta)\,\mathrm{d}\beta\leq F(T;E[B])
\]    
\end{theorem}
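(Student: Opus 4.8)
The statement is an instance of Jensen's inequality applied to the concave map $g(\beta) = F(T;\beta) = 1 - e^{-T\beta}$. The plan is therefore: (i) observe that for fixed $T > 0$ the function $g$ is concave on $[0,\infty)$; (ii) invoke Jensen's inequality for the non-negative random variable $B$ with density $f_B$; (iii) dispose of the degenerate case $E[B] = +\infty$ separately.

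First I would verify concavity directly: $g$ is twice differentiable on $(0,\infty)$ with $g'(\beta) = T e^{-T\beta} > 0$ and $g''(\beta) = -T^2 e^{-T\beta} < 0$, so $g$ is strictly concave there, and continuous up to $\beta = 0$. Hence for every $\beta \geq 0$ the tangent-line bound holds: $g(\beta) \leq g(E[B]) + g'(E[B])\,(\beta - E[B])$, assuming $E[B] \in (0,\infty)$. Integrating both sides against $f_B(\beta)\,\mathrm{d}\beta$ and using $\int_0^\infty f_B(\beta)\,\mathrm{d}\beta = 1$ together with $\int_0^\infty \beta f_B(\beta)\,\mathrm{d}\beta = E[B]$, the linear term integrates to zero and we obtain
\[
\int_0^\infty f_B(\beta) F(T;\beta)\,\mathrm{d}\beta \;\leq\; g(E[B]) \;=\; F(T;E[B]),
\]
which is exactly the claim. (If one prefers, the same conclusion follows by simply citing Jensen's inequality for concave functions, but spelling out the tangent-line argument keeps the proof self-contained and makes clear where each moment enters.)

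For the boundary cases: if $E[B] = 0$ then $B = 0$ almost surely, both sides equal $0$, and equality holds; if $E[B] = +\infty$ then the right-hand side is $1 - e^{-\infty} = 1$, while the left-hand side is an average of quantities in $[0,1)$ and is therefore at most $1$, so the inequality is trivially true. I do not anticipate a genuine obstacle here — the only point requiring a word of care is confirming that the first moment is well defined (it is, since $B \geq 0$, with value in $[0,\infty]$) and handling that value $+\infty$ as above. One may additionally remark that, since $g$ is \emph{strictly} concave, the inequality is strict unless $B$ is almost surely constant, i.e.\ unless the infection rates are in fact homogeneous — which is precisely the mechanism behind the retardation effect reported in the simulations.
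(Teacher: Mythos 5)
Your proposal is correct and follows essentially the same route as the paper: the paper rewrites the left-hand side as $1-E[e^{-TB}]$ and applies Jensen's inequality to the convex function $e^{-T\beta}$, which is exactly your application of Jensen to the concave $F(T;\beta)=1-e^{-T\beta}$, merely phrased on the complementary function. Your tangent-line derivation and the treatment of the degenerate cases $E[B]=0$ and $E[B]=+\infty$ are additional care the paper omits, but they do not change the argument.
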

\begin{proof}
\[
\begin{aligned}
&\int\limits_{0}^{\infty}f_B(\beta)F(T;\beta)\,\mathrm{d}\beta\\
=& 1-\int\limits_{0}^{\infty}f_B(\beta)e^{-T\beta} \,\mathrm{d}\beta \\
=& 1-E[e^{-T B}]
\end{aligned}
\]
Since the exponential function is convex, Jensen's inequality \cite{van2006performance} tells us that
\[
E[e^{-T B}]\geq e^{-T E[B]}
\]
Hence, 
\[
\int\limits_{0}^{\infty}f_B(\beta)F(T;\beta)\,\mathrm{d}\beta\leq 1-e^{-T E[B]} =F(T;E[B])
\]
\end{proof}

Theorem 1, that proves $\rho(T)\leq\rho^*(T)$, tells us that if the infection rate in the classic homogeneous SIS model and the average infection rate in heterogeneous model are the same, then in the same period of time an infection event is more likely to happen in the classic SIS model.

We define $\chi (T)=\rho^*(T)-\rho(T)$ as the difference in infection probability within an arbitrary time interval $T$ between the SIS model with homogeneous and heterogeneous infection rates. 
\begin{equation}
\label{Equ1}
\begin{aligned}
\chi (T ) = & E[e^{-T B}]-e^{-T E[B]} \\
=& \sum_{n=0}^{\infty }\frac{(-T )^{n}\left( E[B^{n}]-\left( E[B]\right)
^{n}\right) }{n!} \\
=& \sum_{n=0}^{\infty }\frac{\left( m_{n}-m_{1}^{n}\right) (-T )^{n}}{n!}
\\
=& \sum_{n=0}^{\infty }(\nu _{2n}-1) \frac{(T m_{1})^{2n}}{%
\left( 2n\right) !}-\sum_{n=0}^{\infty }(\nu _{2n+1}-1) \frac{%
(T m_{1})^{2n+1}}{(2n+1)!}  
\end{aligned}
\end{equation}

Note that the first step in (\ref{Equ1}) is valid only if the sum $\sum_{n=0}^{\infty }\frac{(-T
)^{n}E[B^{n}]}{n!}$ converges. The general log-normal distribution over an infinite range does not satisfy
this condition. However, the infection rates of real-world systems are
finite. Theorem 2 states that any realistic distribution of the infection rates
within a finite range satisfies this convergence condition.

\begin{theorem}
For any non-negative random variable $B$ distributed in a finite range $[0,b]$ and any finite $T$, the sum \[\sum_{n=0}^{\infty }\frac{(-T)^{n}E[B^{n}]}{n!}\leq 2 e^{T b}\] thus converges.
\end{theorem}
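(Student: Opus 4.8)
The only feature of $B$ that the plan uses is that its support lies in $[0,b]$, and this at once controls all moments: since $0\le B\le b$ almost surely, $0\le B^{n}\le b^{n}$, hence $0\le E[B^{n}]\le b^{n}$ for every $n\ge 0$. I would simply feed this bound into the series and compare it, term by term, with the exponential series.

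Concretely, I would bound the series of absolute values. Since each term $|T|^{n}E[B^{n}]/n!$ is non-negative,
\[
\sum_{n=0}^{\infty}\left|\frac{(-T)^{n}E[B^{n}]}{n!}\right|
=\sum_{n=0}^{\infty}\frac{|T|^{n}E[B^{n}]}{n!}
\le\sum_{n=0}^{\infty}\frac{(|T|b)^{n}}{n!}=e^{|T|b},
\]
which for $T>0$ (the range of interest, $T$ being a length of time) is $e^{Tb}\le 2e^{Tb}$. Hence the series converges absolutely for every finite $T$, and its value is bounded in modulus by the claimed $2e^{Tb}$. If one wants the constant $2$ to emerge directly from the combination that actually appears in Eq.~(\ref{Equ1}), note that $m_{1}=E[B]\le b$ as well, so $|m_{n}-m_{1}^{n}|\le m_{n}+m_{1}^{n}\le 2b^{n}$ and the identical comparison gives $\sum_{n\ge 0}\bigl|(m_{n}-m_{1}^{n})(-T)^{n}\bigr|/n!\le 2e^{Tb}$; either way the factor $2$ is a comfortable over-estimate rather than something delicate.

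Having absolute convergence in hand, I would then close the loop with the use the theorem is meant for: by dominated convergence --- the partial sums $\bigl|\sum_{k=0}^{n}(-T\beta)^{k}/k!\bigr|\le e^{|T|b}$ are uniformly bounded by a constant that is trivially integrable against the law of $B$ --- one may exchange $E[\cdot]$ with the summation, so the series equals $E[e^{-TB}]$ and the first equality in Eq.~(\ref{Equ1}) is legitimate.

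There is no real obstacle here; the only points needing a word of care are pulling the absolute value inside the sum and, later, swapping expectation with summation. Both are justified precisely by the boundedness of the support: for a distribution on a finite range the dominating series $\sum(|T|b)^{n}/n!$ is available, whereas for a heavy-tailed $B$ on $[0,\infty)$ (e.g.\ a general log-normal) $E[B^{n}]$ can outgrow $n!/|T|^{n}$ and the argument --- and, as noted just before the theorem, the interchange itself --- breaks down. So the proof reduces to the estimate $E[B^{n}]\le b^{n}$ together with the comparison test.
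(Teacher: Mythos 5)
Your proposal is correct and follows the same overall strategy as the paper --- a term-by-term comparison of $\sum_n |T|^{n}E[B^{n}]/n!$ with the exponential series --- but the key moment bound is obtained differently. The paper integrates by parts, writing $E[B^{n}]=\left.\beta^{n}F_{B}(\beta)\right|_{0}^{b}-\int_{0}^{b}F_{B}(\beta)\,d\beta^{n}$ and then applying the triangle inequality with $F_{B}\le 1$ to get $E[B^{n}]\le 2b^{n}$; this is where its factor $2$ comes from. You instead use the elementary almost-sure bound $0\le B^{n}\le b^{n}$, which gives the sharper $E[B^{n}]\le b^{n}$ and hence $\sum_{n}|T|^{n}E[B^{n}]/n!\le e^{Tb}\le 2e^{Tb}$. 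Your route is simpler, avoids the (unnecessary) integration by parts, and makes clear that the constant $2$ in the statement is slack rather than structural; it also sidesteps a small infelicity in the paper's final display, where $b^{n}$ is miswritten as $B^{n}$. Your additional dominated-convergence remark, justifying the interchange of expectation and summation so that the series actually equals $E[e^{-TB}]$ in Eq.~(\ref{Equ1}), is a worthwhile supplement: that interchange is the real reason the theorem is needed, and the paper asserts the convergence but does not spell out this last step.
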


\begin{proof}
\[
\begin{aligned}
E[B^{n}] &=\int_{0}^{b}\beta ^{n}f_{B}(\beta )d\beta \\
&=\left.\beta ^{n}\int_{0}^{\beta }f_{B}(\beta )d\beta \right\vert
_{0}^{b}-\int_{0}^{b}\int_{0}^{\beta }f_{B}(\beta )d\beta d\beta ^{n} \\ 
&=\left. \beta ^{n}F_{B}(\beta )\right\vert
_{0}^{b}-\int_{0}^{b}F_{B}(\beta )d\beta ^{n}
\end{aligned}
\]%
Since
\[
F_{B}(\beta )=\int_{0}^{\beta }f_{B}(\beta )d\beta \leq 1
\] we have 
\[E[B^{n}]\leq b^{n}+\left\vert \int_{0}^{b}F_{B}(\beta )d\beta
^{n}\right\vert \leq 2b^{n}.
\]
Hence,%
\[
\begin{aligned}
&\left\vert \sum_{n=0}^{\infty }\frac{(-T )^{n}E[B^{n}]}{n!}\right\vert \\
\leq &\sum_{n=0}^{\infty }\frac{|(-T )^{n}||E[B^{n}]|}{n!} \\
\leq & 2\sum_{n=0}^{\infty}\frac{T ^nB^n}{n!}\\
= & 2 e^{T b}
\end{aligned}
\]
which illustrates the convergence of $\sum_{n=0}^{\infty }\frac{(-T)^{n}E[B^{n}]}{n!}$ for any $T$.\\
\end{proof}

Theorem 1 and (\ref{Equ1}) explore only on the local effect: the epidemic spreads on average faster along a link in the heterogeneous case than the homogeneous case. However, if the infection probabilities of all the nodes are similar and the state of the each node (infected or not) is independent, each connected node pair would have a similar fraction of time when one node is infected whereas the other is susceptible, i.e. the period that allows epidemic to spread. In this case, the difference $\chi(T)$, where $0\leq \chi(T)<1$, in infection probability along a link within an arbitrary time $T$ may indicate the difference in the fraction of infected nodes between the homogeneous and heterogeneous SIS in the metastable state. Both the heterogeneous infection rates and the heterogeneous network topology contribute to the heterogeneity in the infection probability of each node. When the recovery rate is low or equivalently the epidemic prevalence is high, however, the infection probabilities of the nodes tend to be similar. Hence, $\chi(T)$ could suggest the difference in the fraction of infected nodes between the heterogeneous and homogeneous cases when the recovery rate is small. The larger the difference $\chi(T)$ is, the smaller the average fraction $y_\infty$ of infected nodes, in the metastable states of the heterogeneous SIS is. Equation (\ref{Equ1}), thus suggests that, the larger even-order moments of the infection rates lead to a smaller average fraction of infected nodes $y_\infty$, but the odd-order moments contribute in the opposite way. These theoretical results help us better understand our two observations in Fig.~\ref{fig:HSIS-1} and \ref{fig:HSIS-2}, when the recovery rates are small: (a) the average fraction $y_\infty$ of infected nodes decreases with the increased variance, and (b) given the same variance, the average fraction $y_\infty$ of infected nodes is lower if the third moment of the distribution is smaller.

\subsection{The log-normal distribution vs.\ the gamma distribution}
\label{sec:LNvsG}
To explore how fast $y_{\infty}$ decays, we perform simulations with different recovery rates $\delta$ and fit the curves of $y_\infty$ vs.\ the variance $v$. We find that, as shown in Fig.~\ref{fig:HSIS-3}, the relationship between the average fraction $y_{\infty}(v)$ of infected nodes and the variance $v$ can be fitted by a double-exponential function $y_{\infty, L}(v)=c_1e^{-c_2v}+c_3e^{-c_4v}$ and a quadratic function $y_{\infty, \Gamma}(v)=c_1v^2-c_2v+c_3$, when the infection rates follow log-normal and gamma distributions respectively. The coefficients $c_1,~c_2,~c_3$, and $c_4$, shown in Table \ref{tab0}, also suggest that, approximately, $y_{\infty, L}$ decreases exponentially with the variance $v$ much slower than the linear decrease of $y_{\infty, \Gamma}$ when $y_{\infty, \Gamma}$ is not close to $0$. 
\begin{figure}[htbp]
\centering
\includegraphics[scale=.3]{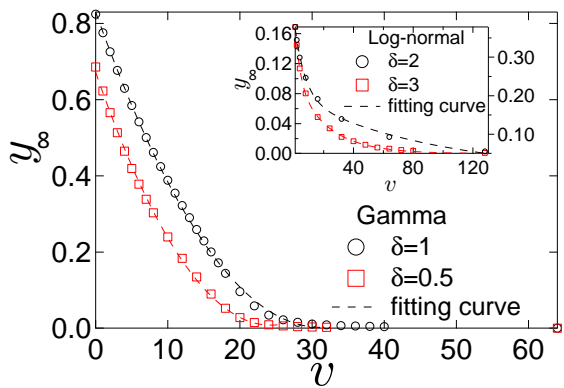}
\caption{The average fraction $y_\infty$ of infected nodes as a function of the variance $v$ of infection rates following gamma distributions. The recovery rates $\delta$ are different: $1$ ($\circ$) and $0.5$ ($\textcolor{red}{\square}$), and the dash lines are fitting curves. The simulations are on ER networks with average degree $\langle k\rangle=4$ and network size $N=10^4$. The results are averaged over $1000$ realizations. The inset contains the results about log-normal distributions.}
\label{fig:HSIS-3}
\end{figure} 

\begin{table}[!tbhp]
\caption{The coefficients of the fitting functions of $y_\infty$ vs. the variance $v$ for different infection-rate distributions under different recovery rates.}
\label{tab0}
\centering
\begin{tabular}{|c|c||c|c|c|c|}
\hline 
Dist. & $\delta$ & $c_1$ &  $c_2$ & $c_3$ & $c_4$\\
\hline
\multirow{2}{*}{$Log-\mathcal{N}$} & $3$ & $0.098$ & $0.045$ & $0.099$ & $0.28$\\
\cline{2-6} 
&$2$ & $0.20$ & $0.011$ & $0.21$ & $0.14$\\
\hline 
\multirow{2}{*}{$\Gamma$} & $1$ & $0.0011$ & $0.055$ & $0.67$ & \multirow{2}{*}{$N/A$} \\
\cline{2-5} 
& $0.5$ & $0.00085$ & $0.053$ & $
0.83$& \\ 
\hline
\end{tabular}
\end{table}

Besides the theoretical explanation as mentioned before, we explore further the physical interpretations of the difference in the fraction of infected nodes between the log-normal and gamma distributed infection rates. We define $r(\beta)$ as the ratio between the PDF of the log-normal and gamma distribution, \textit{i.e.}\ $r(\beta)=\frac{f_B(\beta;\mu, \sigma)}{f_B(\beta;k,\theta)}$. Thus $\lim\limits_{\beta\rightarrow0}r(\beta)=0$ and $\lim\limits_{\beta\rightarrow\infty}r(\beta)=\infty$. This reveals that if we set the same mean and variance (large) for both distributions, the log-normal distribution tends to generate a few extremely large values whereas the gamma distribution generates many extremely small values to produce the large variance. 

\begin{table}[!tbhp]
\renewcommand{\arraystretch}{1.3}
\caption{The percentiles of the log-normal and gamma distribution with the mean $m_1=1$ and variance $v=16$}
\label{tab1}
\centering

\begin{tabular}{|c||c|c|}
\hline 
Percentiles & $Log-\mathcal{N}$ & $\Gamma$\\
\hline
$1^{st}$ & $0.00483$ & $9.44\times10^{-32}$\\
\hline
$2.5^{th}$ & $0.00895$ & $2.20\times10^{-25}$\\
\hline
$5^{th}$ & $0.0152$ & $1.44\times10^{-20}$\\
\hline
$10^{th}$ & $0.0280$ &  $9.44\times10^{-16}$\\
\hline
$25^{th}$ & $0.0779$ & $2.20\times10^{-9}$\\
\hline
$50^{th}$ & $0.243$ & $1.44\times10^{-4}$\\
\hline
\end{tabular}
\end{table}

In Table~\ref{tab1}, we show the percentiles\footnote{A percentile is a measure to indicate the value below which a given percentage of observations in a group of observations fall.} of the two distributions with a large variance $v=16$. In a group of random numbers generated by the gamma distribution, $25\%$ of them are even smaller than $2.2\times10^{-9}$. The infection events driven by such small rates can hardly happen. However, in the infection rates generated by the log-normal distribution, even the first $1\%$ smallest values are large enough to make possible infections. Hence, the gamma distribution effectively filters the network more than the log-normal distribution, and reduce the spread of the epidemic more. This interpretation is also consistent with the theoretical explanation of the influence of the third moment of a distribution. The same large variance can be introduced by the log-normal distribution via the possibility of generating a large value and by the gamma distribution via the high probability of generating extremely small values. However, the gamma distribution leads to a smaller third moments compared to the log-normal distribution and the small infection rates it generates effectively filter the network, reducing the epidemic spread. 

\section{Large recovery rates}
We have shown that when the recovery rates are small, the i.i.d.~heterogeneous infection rates retards the epidemic spreading and the larger variance of infection rates leads to a smaller average fraction of infected nodes. Moreover, we further explained the influence of the higher moments of the infection rate on epidemic spreading. In this section, we discuss how the heterogeneous infection rates influence the epidemic spreading when the recovery rate is large, thus, the epidemic is close to die out. As an example, we show the simulation results of the SF networks with the log-normal distributed infection rates. We find that, the heterogeneous infection rates may increase the probability that the epidemic spreads out when the recovery rate is large, though if the epidemic can spread out, the larger variance of infection rates still leads to a smaller average fraction of infected nodes in the metastable state. 

\label{sec:SF}
We first employ the log-normal distribution for the heterogeneous infection rates and set the recovery rate $\delta=20$. As shown in Fig.~\ref{fig:HSIS-n1}, though the average fraction $y_\infty$ of infected nodes is close to $0$ (due to the large recovery rate), we can observe that the larger variance may lead to a slightly larger average fraction $y_\infty$ of infected nodes. However, the error bars (the standard deviation of the simulation results from different realizations) are large as compared to the average fraction of infected nodes. This is due to the fact that when the epidemic is close to die out on average, i.e.\ when $\delta=20$,  the epidemic dies out in some iterations of the simulations but spreads out with a nonzero fraction of infected nodes in the metastable state in the others. 
\begin{figure}[!t]
\centering
\includegraphics[scale=.35]{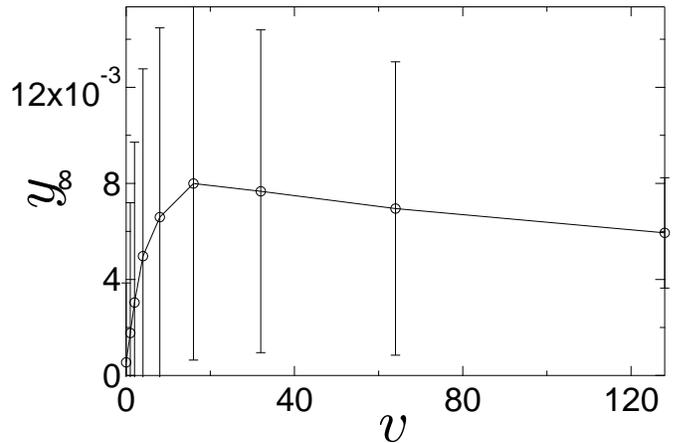}
\caption{The average fraction $y_{\infty}$ of infected nodes as a function of the variance $v$ of infection rates following the log-normal distribution, and the recovery rate $\delta=20$. The simulations are on SF networks with the exponent $\lambda=2.5$ and the network size $N=10^4$. The results are averaged over $1000$ realizations, and the error bars are the standard deviations of the results in different realizations. }
\label{fig:HSIS-n1}
\end{figure}        

Fig.~\ref{fig:HSIS-n2} shows the percentage $p^*$ ($\in [0,1]$) of the spread-out realizations in all realizations and the average fraction $y_{\infty}^*$ of infected nodes in these nonzero-infection realizations as a function of the variance of the infection rates. Here the the simulations are on SF networks with the size $N=10^4$ and the exponent $\lambda=2.5$. Clearly, the average fraction of infected nodes obtained by averaging that in all realizations is $y_\infty=p^*y_\infty^*$. We find that, in all nonzero-infection realizations, the average fraction $y_\infty^*$ of infected nodes still decreases as the variance of the infection rates increases. The average fraction $y_\infty$ of infected nodes obtained from all realizations may increase as the variance of the infection rates increases, because the percentage $p^*$ of nonzero-infection realizations increases when the variance of the infection rates is small and increases. Hence, the heterogeneous infection rates may enhance the probability that the epidemic spreads out. This can be explained as follows: the heterogeneous infection rates and the hubs in scale-free networks enable those links with a large infection rate to form a connected subgraph, allowing the epidemic to spread out. However, when the variance $v$ is large and further increases, as shown in Fig.~\ref{fig:HSIS-n2}, the fraction of non-zero infection realizations decreases. This is because, a large variance $v$ of the infection rates produces fewer large infection rates, prohibiting the formation of a connected subgraph with high infection rates that allows the epidemic to spread. However, the average fraction of infected nodes of the nonzero-infection realizations tend to decrease with the variance or heterogeneity of the infection rates. 
\begin{figure}[!t]
\centering
\includegraphics[scale=.32]{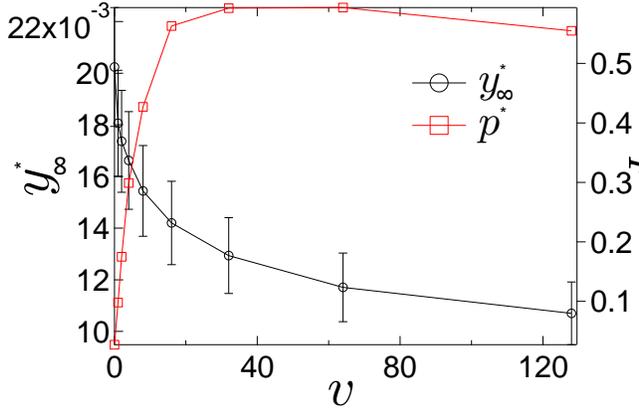}
\caption{The average fraction $y^*_{\infty}$ ($\circ$) of infected nodes in the nonzero-infection realizations and the percentage $p^*$ ($\textcolor{red}{\square}$) of the nonzero-infection realizations as a function of the variance $v$ of infection rates. The infection rates follow the log-normal distribution and the recovery rate $\delta=20$. The simulations are on SF networks with the exponent $\lambda=2.5$ and the network size $N=10^4$. The results are averaged over $1000$ realizations, and the error bars are the standard deviations of the results in different realizations. }
\label{fig:HSIS-n2}
\end{figure}   

If we increase the recovery rate to ensure that the epidemic dies out in the homogeneous case, \textit{i.e.}\ the effective infection rate is below the epidemic threshold $\tau_c$ in the classic SIS model, we obtain the same conclusions: the average fraction $y_\infty^*$ of infected nodes in nonzero-infection realizations (if exist)  always decreases as the variance of the infection rates increases, and the heterogeneous infection rates may increase the probability that the epidemic spreads out. 

We further compare the simulation results between the log-normal and gamma distributions. As shown in Fig.~\ref{fig:HSIS-n3a}, the average fraction $y^*_{\infty}$ of infected nodes in nonzero-infection realizations is larger when the infection rates follow the log-normal distribution than the gamma distribution. This observation is consistent with our previous observations and conclusions as illustrated in Section \ref{smallRecov}, when the variances of the infection rates are the same, the larger third moments of the infection rates lead to the more severe infection. However, as shown in Fig.~\ref{fig:HSIS-n3b}, when the variance of the infection rates is small, the percentage $p^*_\Gamma$ of the nonzero-infection realizations is larger in the case of the gamma distributed infection rates than the percentage $p^*_L$ of the nonzero-infection realizations in the case of the log-normal distributed infection rates. Moreover, as the variance of the infection rates is relatively large (for example, around $30$ in Fig.~\ref{fig:HSIS-n3b}) and increases, $p^*_\Gamma$ decreases faster than $p^*_L$, and $p^*_\Gamma$ could be smaller than $p^*_L$ if the variance is large enough. Given a network and a large recovery rate, more large infection rates lead to a higher probability that the epidemic can spread out. As in Section \ref{sec:LNvsG}, we can explain the observations in Fig.~\ref{fig:HSIS-n3b} by exploring the percentiles of the log-normal and gamma distributions with the mean $1$ in Table \ref{tabv16-128}, where two values ($16$ and $128$) of the variance are employed as examples. When the variance is $16$, there are more large values in a group of random numbers generated by the gamma distribution than the log-normal distribution; however, when the variance increases to $128$, though the first $1\%$ largest values of the gamma distribution are still larger than those of the log-normal distribution, there are more large values in the group of the log-normal random numbers. Hence, with the same small variance, the gamma distributed infection rates contribute more to the survival of the epidemic than the log-normal distributed infection rates, whereas with the same large variance, the log-normal distributed infection rates may lead to a higher probability that the epidemic spreads out.  
\begin{figure}[!thbp]
\centering
\subfigure[]{
\includegraphics[scale=.3]{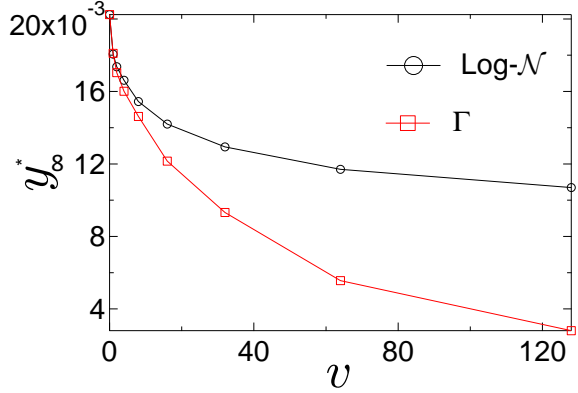}
\label{fig:HSIS-n3a}
}
\subfigure[]{
\includegraphics[scale=.3]{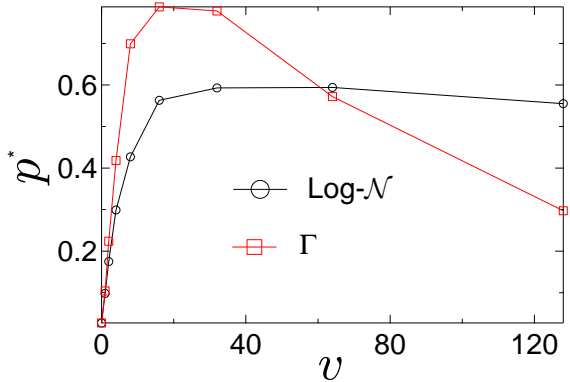}
\label{fig:HSIS-n3b}
}

\caption{(a) The average fraction $y^*_\infty$ of infected nodes in the nonzero-infection realizations and (b) The percentage $p^*$ of the nonzero-infection realizations in all realizations as a function of the variance $v$ of the infection rates which follow the gamma ($\circ$) and log-normal ($\textcolor{red}{\square}$) distribution.}
\end{figure} 

\begin{table}[!tbhp]
\renewcommand{\arraystretch}{1.3}
\caption{The percentiles of the log-normal and gamma distributions with the mean $m_1=1$ and variance $v=16$ and $128$}
\label{tabv16-128}
\centering

\begin{tabular}{|c||c|c|c|c|}
\hline 
 & $Log\mathrm{-}\mathcal{N}$ & $\Gamma$&$Log\mathrm{-}\mathcal{N}$ & $\Gamma$\\
 & $v=16$ & $v=16$& $v=128$ & $v=128$\\
\hline
$99^{th}$ & $12.1936$ & $19.9409$& $15.1178$ & $24.2306$\\
\hline
$98^{th}$ & $7.7225$ & $12.9981$& $8.2133$ & $5.7424$\\
\hline
$97^{th}$ & $5.7697$ & $9.3923$& $5.6176$ & $1.5509$\\
\hline
$96^{th}$ & $4.6209$ &  $7.1783$& $4.2063$ & $0.4140$\\
\hline
$95^{th}$ & $3.8751$ & $5.6325$& $3.3260$ & $0.1507$\\
\hline
\end{tabular}
\end{table}

We observe the same in ER networks as shown in the Appendix. Moreover, the links with i.i.d.~large infection rates are more likely to form a subgraph in SF networks than in ER networks, because of the existence of the nodes with large degrees in SF networks. Hence, with the similar value of the average fraction $y^*_\infty$ of infected nodes in the nonzero-infection realizations, we find that the percentage $p^*$ of the nonzero-infection realizations is much smaller in ER networks than SF networks.

We further consider an extreme case of SF networks -- the star network: one central node $n_0$ connects with all the other $m$ ($m\gg 1$) side nodes $n_i$ ($i=1,2,...,m$), and there is no link between any pair of the side nodes. By designing a specific distribution of the heterogeneous infection rates, we can always give a value of the recovery rate $\delta$ so that the epidemic spreads out with the heterogeneous infection rates but dies out with the corresponding homogeneous infection rates in a finite-size star network. In the classic model, the epidemic threshold of a star network is $\tau_c=\frac{\beta}{\delta}=\frac{1}{\sqrt{m}}$ \cite{van2011n}. If we set the homogeneous infection rate $\beta=1$ and the recovery rate $\delta=\sqrt{m}+\epsilon$, where $\epsilon$ is a positive but small constant number, then the epidemic dies out. With the same recovery rate, we set the heterogeneous infection rate with the distribution $Pr[B=2-\epsilon_1]=Pr[B=\epsilon_1]=0.5$, where $\epsilon_1$ is again a small and positive constant number, thus the average infection rate $E[B]=1$. We now look at the subgraph which is composed of the central node and approximately $\frac{m}{2}$ side nodes connected to the central node with infection rate $\beta_{sub}=2-\epsilon_1$. The effective infection rate is $\tau_{sub}=\frac{\beta_{sub}}{\delta}=\frac{2-\epsilon_1}{\sqrt{m}+\epsilon_1}\approx\frac{2}{\sqrt{m}}>\frac{1}{\sqrt{m/2}}\approx\tau_{c,sub}$, where $\tau_{c,sub}$ is the epidemic threshold of the subgraph. Hence, with the same recovery rate and the same average infection rates, the epidemic dies out in the homogeneous case but spreads out in the aforementioned heterogeneous case.

\section{Real-world networks}
\label{Chap:Real}
As mentioned in Section~\ref{Chap:Hbeta}, the interaction frequency between two nodes in a real-world network has been considered as the infection rate between the pair of nodes. In this section, we choose two real-world networks as examples to illustrate how their heterogeneous infection rates affect the spread of SIS epidemics on these networks. The heterogeneous infection rates from the datasets are normalized by the average so that the average is $1$. We compare the average fraction of infected nodes in the metastable state of the two networks in the 3 scenarios: 1) each network is equipped with its normalized original heterogeneous infection rates (hetero-$\beta$) as given in the dataset; 2) each network is equipped with the infection rates in the normalized original dataset but randomly shuffled (shuffled-$\beta$); 3) each network is equipped with a constant infection rate (homo-$\beta$) which equals to the average infection rate of the normalized original infection rates as given in the datasets. The heterogeneous infection rates in each network described in Scenario $1$ are possibly correlated. For example, the infection rate of a link may depend on the degrees of the two ending nodes of this link. The shuffling in Scenario 2 effectively removes the correlation if it exists, and the infection rates in Scenario 3 are homogeneous as in the classic SIS model. Our objective is to explore the relation between the infection rates and average fraction of infection in these 3 scenarios for both networks to verify our previous findings. 

The first network is the airline network where the nodes are the airports, the link between two nodes indicates that there's at least one flight between these two airports, and the infection rate along a link is the number of flights between the two airports. We construct this network and its infection rates from the dataset of openFlights\footnote{http://openflights.org/data.html}. The other one is the co-author network, where the nodes are the authors of papers, the link represents that the two corresponding authors have at least one collaborated paper, and the infection rate is the collaboration frequency\cite{newman2001structure}. 

Besides the infection rates, the network topology may as well influence the spread of SIS epidemics. We explore the most fundamental network feature of the two networks: the degree distributions which are shown in Fig.~\ref{fig:HSIS-4}. We can see that the degree distributions of the airline network and co-author network approximately follow a power law with the slope $\lambda=1.5$ and $2.5$ respectively. Hence, the degree distributions of the two networks influence the spread of epidemics in a similar way. More details of the two networks are listed in Table~\ref{tab2}. Note that we normalized the infection rates of each network by its mean so that the average rate is $1$.  

\begin{figure}[htbp]
\centering
\includegraphics[scale=.3]{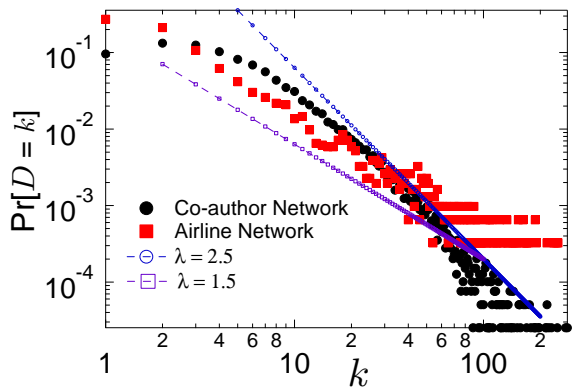}
\caption{The degree distributions of the airline ($\textcolor{red}{\blacksquare}$) and co-author network($\bullet$) can be approximately fitted by the power law distribution with a slope $\lambda=1.5$ and $2.5$ respectively.}
\label{fig:HSIS-4}
\end{figure} 

\begin{table}[!tbhp]
\renewcommand{\arraystretch}{1.3}
\caption{The number of nodes, number of links, variance of infection rates and range of infection rates in the two networks.}
\label{tab2}
\centering
\begin{tabular}{|c|c|c|c|c|}
\hline 
Name & Nodes &  Links & Variance & Range\\
\hline
Airline & $3071$ & $15358$ & $0.5560$ & $[0.2383, 11.0626]$\\
\hline
Co-author & $39577$ & $175692$ & $3.0566$ & $[0.0678,90.4625]$\\
\hline 
\end{tabular}
\end{table}

The distributions of the infection rates from the two networks are shown in Fig.~\ref{fig:51} and \ref{fig:52}. We find that, approximately the infection rates of the airline network are exponentially distributed, whereas those of the co-author network follow a log-normal distribution. Both of the two datasets support our previous choices of the infection-rate distribution.  

\begin{figure}[!thbp]
\centering
\subfigure[]{
\includegraphics[scale=.3]{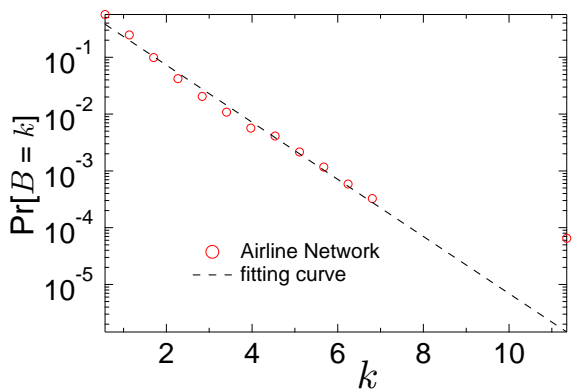}
\label{fig:51}
}
\subfigure[]{
\includegraphics[scale=.3]{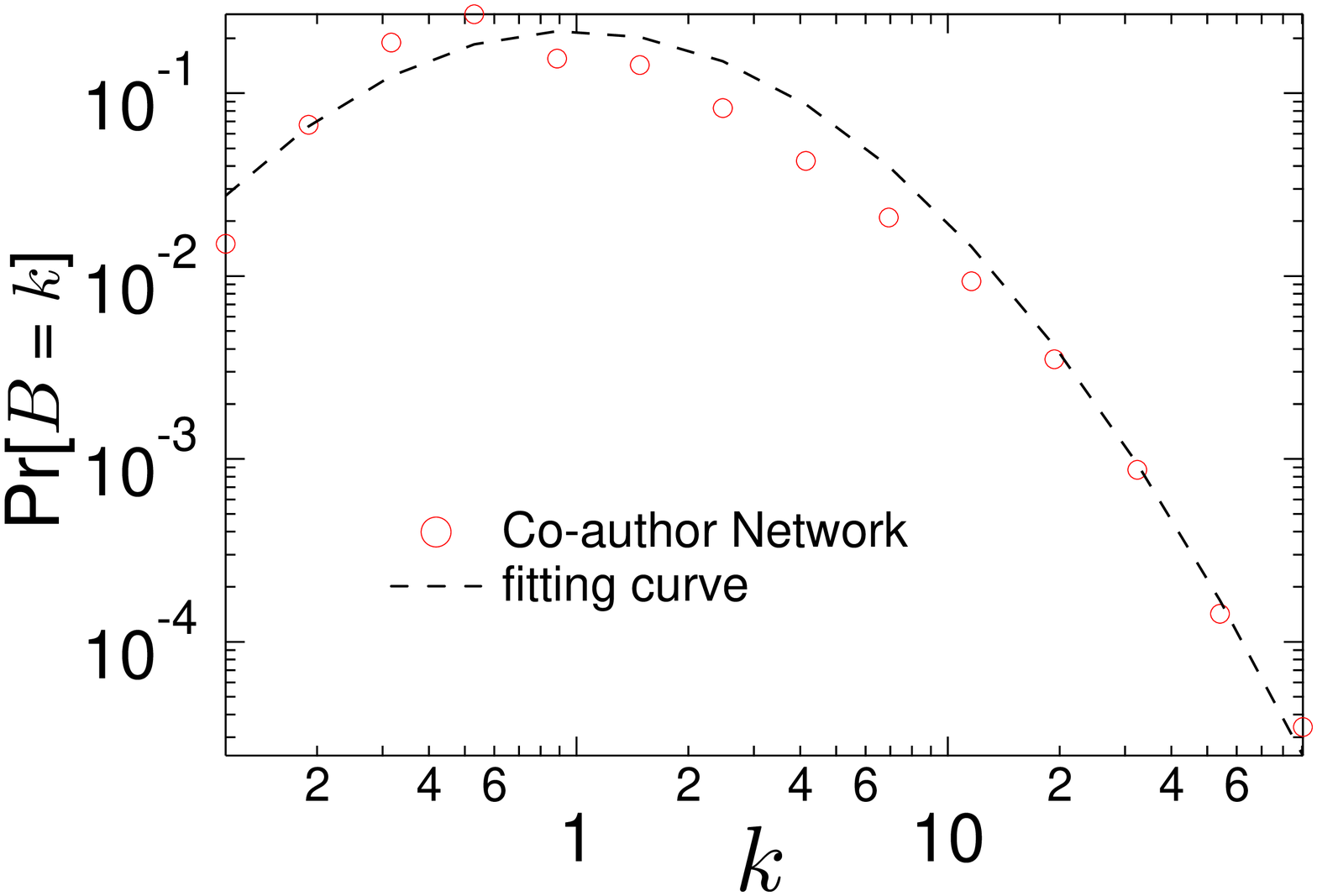}
\label{fig:52}
}

\caption{The distribution of the infection rates from real-world networks: (a) airline network and (b) co-author network. In each figure, the distribution ($\textcolor{red}{\circ}$) and fitting curve (dash line) are shown. The fitting curves are exponential and log-normal distributions in (a) and (b) respectively.}
\end{figure} 
\subsection{Small recovery rates}
We first consider the small recovery rates, with which the epidemic does not die out in any realizations. In this paper, we assume that the infection rates are i.i.d.~which corresponds to Scenario 2. As shown in Fig.~\ref{fig:61} and \ref{fig:62}, the average fraction $y_\infty$ of infected nodes in Scenario homo-$\beta$ is always larger than that in Scenario shuffled-$\beta$, which confirms our conclusion that the heterogeneity of infection rates on average retards the contagion processes of epidemics, when recovery rates are not very large. Moreover, we find that the reduction $y_{\infty,\text{homo-}\beta}-y_{\infty,\text{shuffled-}\beta}$ is larger in the co-author network, which has a larger variance of infection rates, than that in the airline network\footnote{We assume that the two networks have a similar topology, since they have a similar degree distribution as shown in Fig.~\ref{fig:HSIS-4}}. This observation verifies our conclusion that, the larger the variance of the infection rates is, the smaller $y_\infty$ is. Compared to the independent infection rates in the case shuffled-$\beta$, the possibly correlated infection rates in the case hetero-$\beta$ can further decrease (in e.g.~the airline network) or increase (in e.g.~the co-author network) the average fraction of infected nodes. This observation points out a new challenging question: what is the influence of such correlated heterogeneous infection rates on the SIS epidemics.       

\begin{figure}[!thbp]
\centering
\subfigure[]{
\includegraphics[scale=.3]{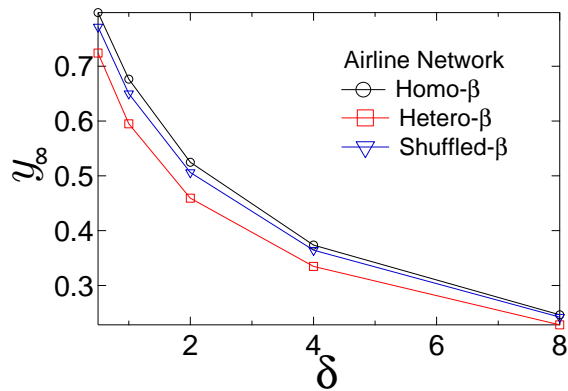}
\label{fig:61}
}
\subfigure[]{
\includegraphics[scale=.3]{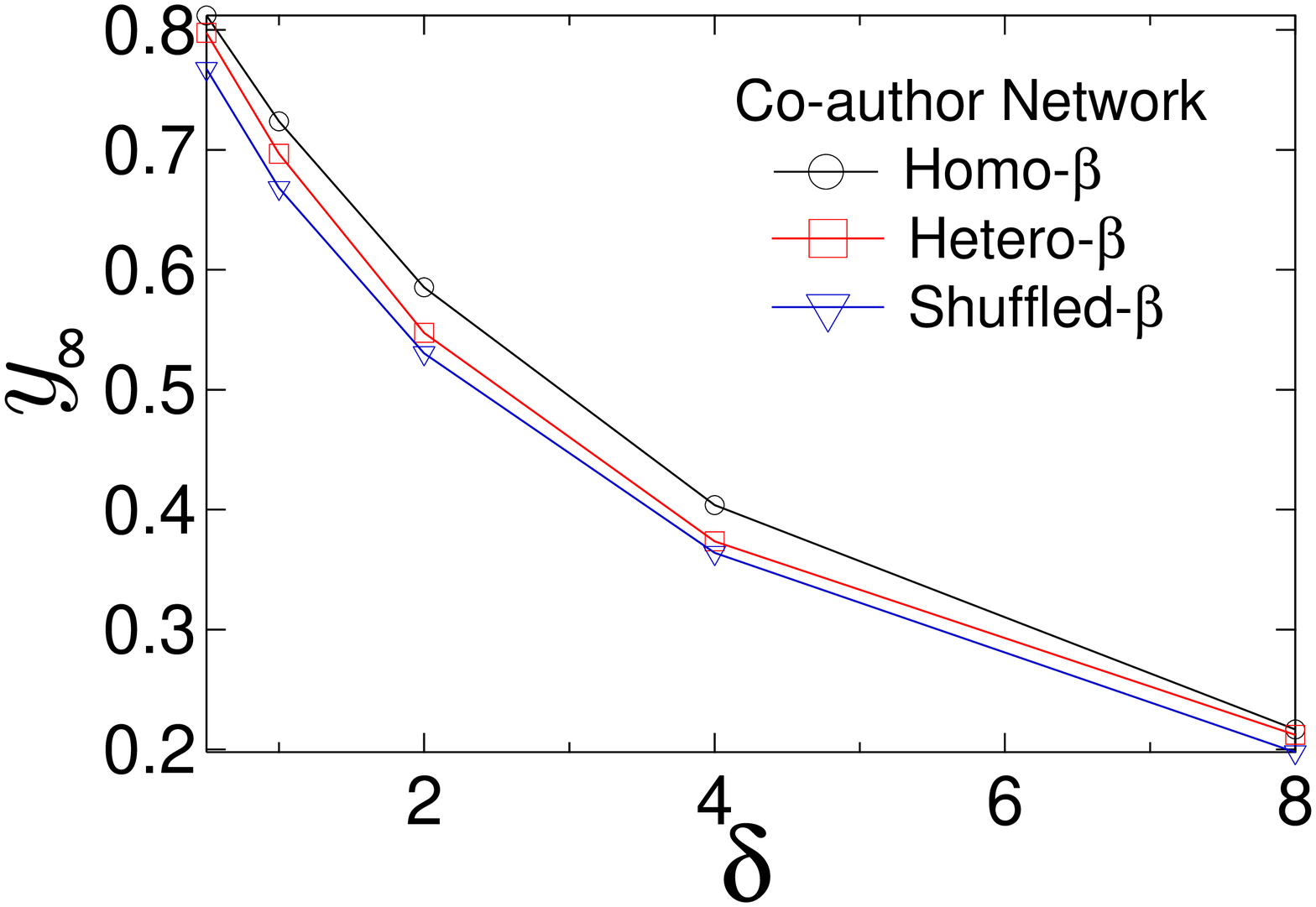}
\label{fig:62}
}
\caption{The average fraction $y_\infty$ of infected nodes as a function of the recovery rate $\delta$. The networks are from real-world: (a) airline network and (b) co-author network. In each figure, the SIS model with homogeneous ($\circ$), original heterogeneous ($\textcolor{red}{\square}$) and shuffled heterogeneous ($\textcolor{blue}{\triangledown}$) infection rates are compared.}
\end{figure}

\subsection{Large recovery rates}
As shown in Fig.~\ref{fig:7}, when the recovery rate increases and the effective infection rate is close to the epidemic threshold, the average fraction $y_\infty$ of infected nodes in the Scenario hetero-$\beta$ becomes mostly larger than that in the other two scenarios. Besides that, it is still consistent with our previous conclusion that if $y_{\infty,\text{homo-}\beta}\ne 0$, then $y_{\infty,\text{homo-}\beta}>y_{\infty,\text{shuffled-}\beta}$. Moreover, in the co-author network, we observe that when the recovery rate $\delta=40$, $y_{\infty,\text{shuffled-}\beta}>y_{\infty,\text{homo-}\beta}=0$. However, in the airline network, we cannot observe that $y_{\infty,\text{shuffled-}\beta}>y_{\infty,\text{homo-}\beta}$ with any selected recovery rate, and this may be because of the small variance of the infection rates. Hence, the observations verify our conclusions that if the epidemic spreads out with the homogeneous infection rates, then the overall infection is always more severe than that with the heterogeneous infection rates (i.i.d.~and with the same mean as the homogeneous infection rate); however, the heterogeneous infection rate may contribute to the survival of the epidemic. 
\begin{figure}[!thbp]
\centering
\subfigure[]{
\includegraphics[scale=.3]{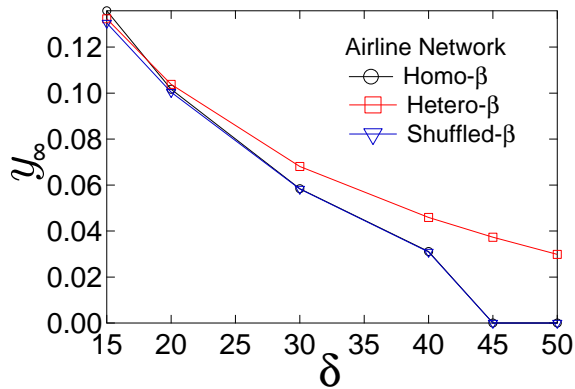}
\label{fig:71}
}
\subfigure[]{
\includegraphics[scale=.3]{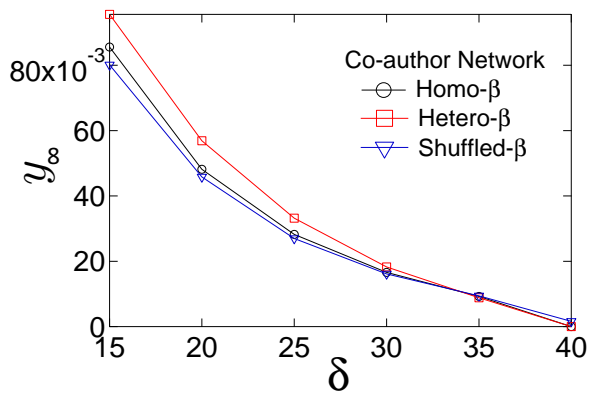}
\label{fig:72}
}
\caption{The average fraction $y_\infty$ of infected nodes as a function of the recovery rate $\delta$. The recovery rate is very large so that the effective infection rate is close to the epidemic threshold. The networks are from real-world: (a) airline network and (b) co-author network. In each figure, the SIS model with homogeneous ($\circ$), original heterogeneous ($\textcolor{red}{\square}$) and shuffled heterogeneous ($\textcolor{blue}{\triangledown}$) infection rates are compared.}
\label{fig:7}
\end{figure}

\section{Discussions}
In summary, we illustrate with simulations, theoretical analysis and physical interpretations that, when the recovery rate is small, the heterogeneity of infection rates on average retards the virus spread and whereas the larger even-order moments of the infection rates tend to lead to a smaller $y_\infty$, the odd-order moments contribute in the other way around; when the recovery rate is large so that the epidemic may die out, the heterogeneous infection rates may enhance the probability that the epidemic spread out. We also verify the influence of the heterogeneity of infection rates on virus spread in real-world networks. Our work reveals that the higher moments, especially the variance, of the infection rates may evidently affect the epidemic spread, even far more seriously than intuitively expected. Our finding implies that real-world heterogeneous epidemic spread may not be as severe as the classic homogeneous SIS model predicts, but the heterogeneous epidemic may not be as easy as the homogeneous SIS model indicates to die out. 

In this work, we have focused on the Markovian SIS where the time for an infected node $i$ to infect a susceptible neighbor $j$ is an exponential random variable with rate $\beta_{ij}$. Theorem 1 can be extended to Non-Markovian SIS models with heterogeneous infection rates where the infection time between a neighboring infected susceptible node pair $(i,j)$ with average $1/\beta_{ij}$ follows a distribution other than the exponential distribution. Such extension to Non-Markovian SIS models is possible if $1-F(\tau;\beta)$ the probability that the infection time is larger than $\tau$ when the average infection time is $1/\beta$ is a convex function of $\beta$.

The time for an infected node to infect a susceptible neighbor is more in depth and detailed information. Infection time measurement becomes possible though in general is still challenging. For example, in the experiments of the epidemic in the plant population, the infection time can be measured. As more such datasets become available, it would be interesting to tackle a new direction: what is the influence of the heterogeneous infection time on viral spreading?

\section*{Acknowledgements}
\thispagestyle{empty}

We thank Shlomo Havlin and Piet Van Mieghem for their inspiring comments regarding to the problem definition and the mathematical proofs respectively. We also wish to thank CONGAS (Grant No.\ FP7-ICT-2011-8-317672) for support. 

\clearpage

\newpage
\begin{appendix}
\begin{figure}[!thbp]
\label{fig:A1}
\centering
\subfigure[]{
\includegraphics[scale=.3]{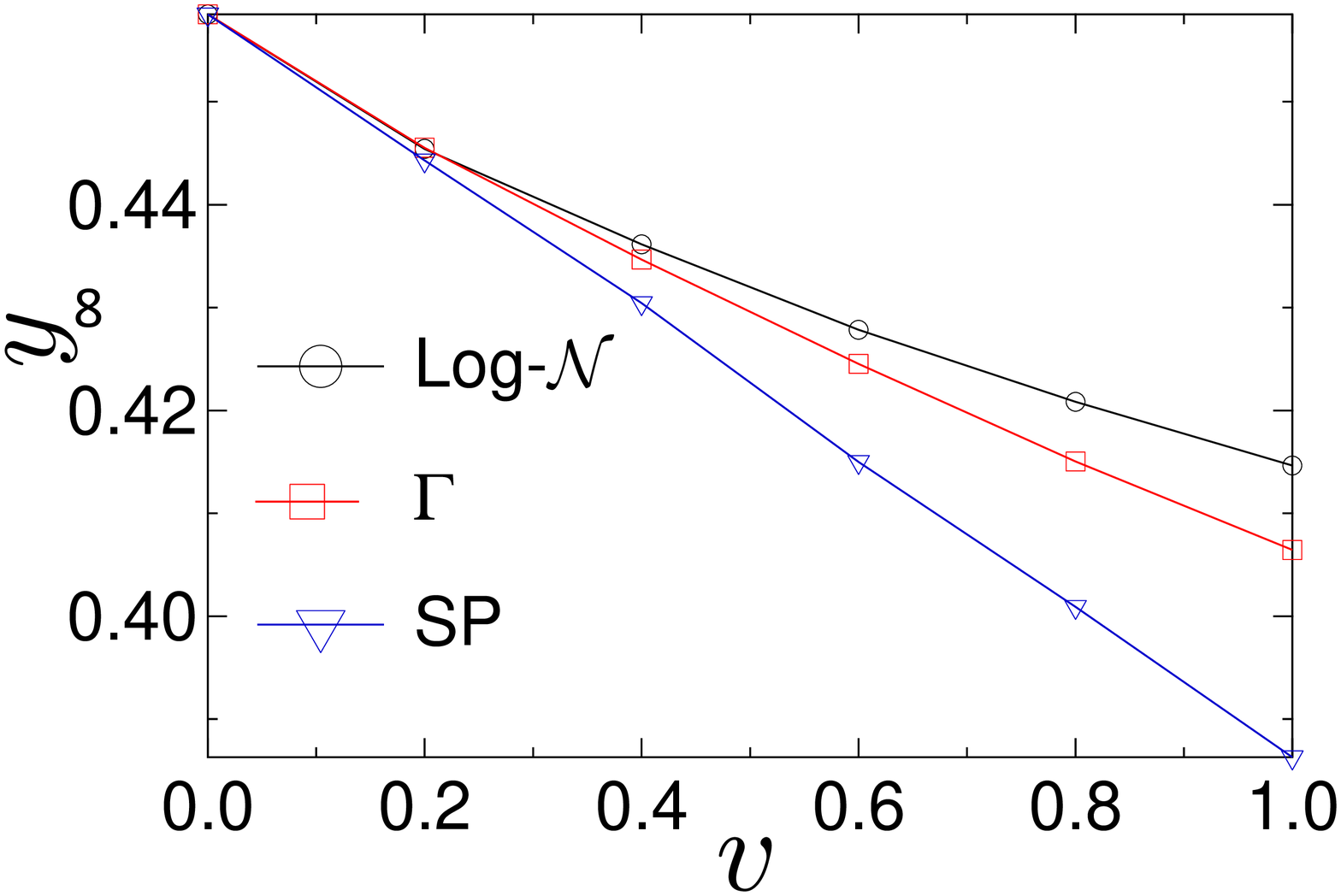}
\label{fig:a01}
}
\subfigure[]{
\includegraphics[scale=.3]{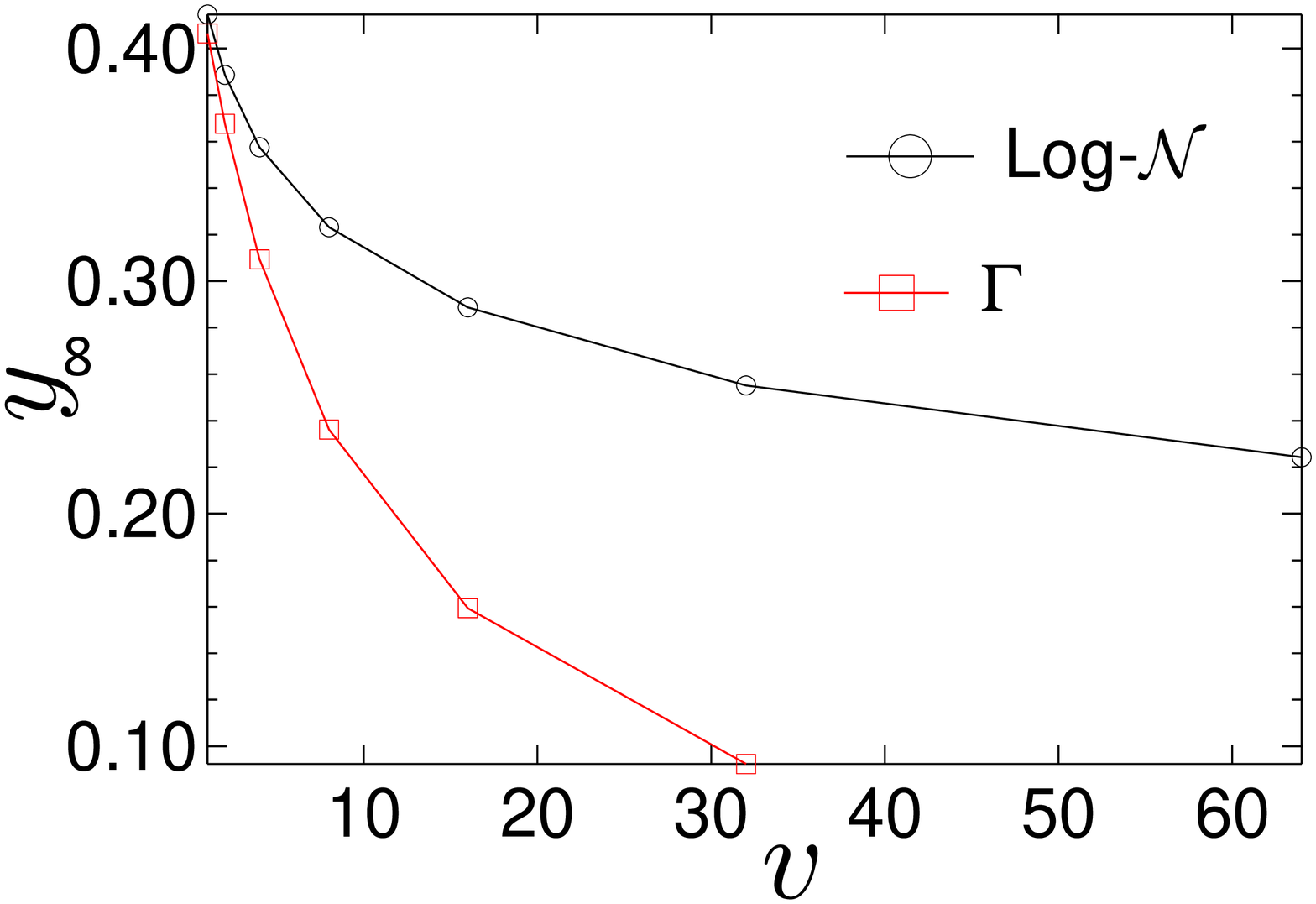}
\label{fig:a02}
}

\caption{The average fraction $y_\infty$ of infected nodes as a function of the normalized variance $v$ ((a) $v\leq 1$ and (b) $v\geq 1$) of infection rates for log-normal ($\circ$), gamma ($\textcolor{red}{\square}$), and SP (($\textcolor{IgorBlue}{\triangledown}$) infection rates distribution respectively, and the recovery rate $\delta=2$. We consider SF networks with the exponent $\lambda=2.5$ and network size $N=10^4$. The results are averaged over $1000$ realizations.}
\end{figure}

\begin{figure}[!thbp]
\label{fig:A2}
\centering
\subfigure[]{
\includegraphics[scale=.3]{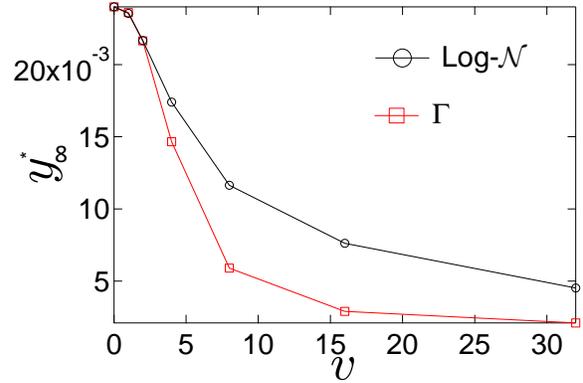}
\label{fig:a03}
}
\subfigure[]{
\includegraphics[scale=.3]{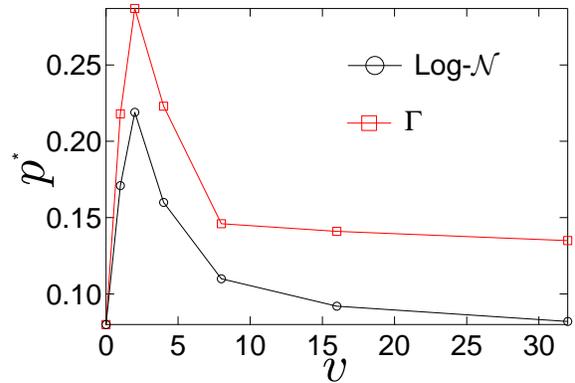}
\label{fig:a04}
}

\caption{(a) The average fraction $y^*_\infty$ of infected nodes in the nonzero-infection realizations and (b) The percentage $p^*$ of the nonzero-infection realizations in all realizations as a function of the variance $v$ of the infection rates which follow the gamma ($\circ$) and log-normal ($\textcolor{red}{\square}$) distribution. We consider ER networks with the average degree $E[D]=4$ and network size $N=10^4$. The recovery rate is $\delta=3.95$. The results are averaged over $1000$ realizations.}
\end{figure}
\end{appendix}


\begin{thebibliography}{10}
\providecommand{\url}[1]{#1}
\csname url@samestyle\endcsname
\providecommand{\newblock}{\relax}
\providecommand{\bibinfo}[2]{#2}
\providecommand{\BIBentrySTDinterwordspacing}{\spaceskip=0pt\relax}
\providecommand{\BIBentryALTinterwordstretchfactor}{4}
\providecommand{\BIBentryALTinterwordspacing}{\spaceskip=\fontdimen2\font plus
\BIBentryALTinterwordstretchfactor\fontdimen3\font minus
  \fontdimen4\font\relax}
\providecommand{\BIBforeignlanguage}[2]{{%
\expandafter\ifx\csname l@#1\endcsname\relax
\typeout{** WARNING: IEEEtranS.bst: No hyphenation pattern has been}%
\typeout{** loaded for the language `#1'. Using the pattern for}%
\typeout{** the default language instead.}%
\else
\language=\csname l@#1\endcsname
\fi
#2}}
\providecommand{\BIBdecl}{\relax}
\BIBdecl

\bibitem{albert1999internet}
R.~Albert, H.~Jeong, and A.-L. Barab{\'a}si, ``Internet: Diameter of the
  world-wide web,'' \emph{Nature}, vol. 401, no. 6749, pp. 130--131, 1999.

\bibitem{albert2000error}
------, ``Error and attack tolerance of complex networks,'' \emph{Nature}, vol.
  406, no. 6794, pp. 378--382, 2000.

\bibitem{barabasi1999emergence}
A.-L. Barab{\'a}si and R.~Albert, ``Emergence of scaling in random networks,''
  \emph{Science}, vol. 286, no. 5439, pp. 509--512, 1999.

\bibitem{boccara1993critical}
N.~Boccara and K.~Cheong, ``Critical behaviour of a probabilistic automata
  network {SIS} model for the spread of an infectious disease in a population
  of moving individuals,'' \emph{J Phys A-Math Gen}, vol.~26, no.~15, p. 3707,
  1993.

\bibitem{buono2013slow}
C.~Buono, F.~Vazquez, P.~Macri, and L.~Braunstein, ``Slow epidemic extinction
  in populations with heterogeneous infection rates,'' \emph{Physical Review
  E}, vol.~88, no.~2, p. 022813, 2013.

\bibitem{caldarelli2000fractal}
G.~Caldarelli, R.~Marchetti, and L.~Pietronero, ``The fractal properties of
  internet,'' \emph{EPL (Europhysics Letters)}, vol.~52, no.~4, p. 386, 2000.

\bibitem{cator2013susceptible}
E.~Cator and P.~Van~Mieghem, ``Susceptible-infected-susceptible epidemics on
  the complete graph and the star graph: Exact analysis,'' \emph{Phys. Rev. E},
  vol.~87, no.~1, p. 012811, 2013.

\bibitem{PhysRevLett.85.4626}
\BIBentryALTinterwordspacing
R.~Cohen, K.~Erez, D.~ben Avraham, and S.~Havlin, ``Resilience of the internet
  to random breakdowns,'' \emph{Physical Review Letters}, vol.~85, pp.
  4626--4628, Nov 2000. [Online]. Available:
  \url{http://link.aps.org/doi/10.1103/PhysRevLett.85.4626}
\BIBentrySTDinterwordspacing

\bibitem{cook2007estimation}
A.~R. Cook, W.~Otten, G.~Marion, G.~J. Gibson, and C.~A. Gilligan, ``Estimation
  of multiple transmission rates for epidemics in heterogeneous populations,''
  \emph{Proceedings of the National Academy of Sciences}, vol. 104, no.~51, pp.
  20\,392--20\,397, 2007.

\bibitem{daley2001epidemic}
D.~J. Daley, J.~Gani, and J.~M. Gani, \emph{Epidemic modelling: an
  introduction}.\hskip 1em plus 0.5em minus 0.4em\relax Cambridge University
  Press, 2001, vol.~15.

\bibitem{erdds1959random}
P.~Erd{\H{o}}s and A.~R{\'e}nyi, ``On random graphs i.'' \emph{Publ. Math.
  Debrecen}, vol.~6, pp. 290--297, 1959.

\bibitem{Fratiglioni20001315}
L.~Fratiglioni, H.-X. Wang, K.~Ericsson, M.~Maytan, and B.~Winblad, ``Influence
  of social network on occurrence of dementia: a community-based longitudinal
  study,'' \emph{The Lancet}, vol. 355, no. 9212, pp. 1315 -- 1319, 2000.

\bibitem{fu2008epidemic}
X.~Fu, M.~Small, D.~M. Walker, and H.~Zhang, ``Epidemic dynamics on scale-free
  networks with piecewise linear infectivity and immunization,'' \emph{Phys.
  Rev. E}, vol.~77, no.~3, p. 036113, 2008.

\bibitem{ganesh2005effect}
A.~Ganesh, L.~Massouli{\'e}, and D.~Towsley, ``The effect of network topology
  on the spread of epidemics,'' in \emph{INFOCOM 2005. 24th Annual Joint
  Conference of the IEEE Computer and Communications Societies. Proceedings
  IEEE}, vol.~2.\hskip 1em plus 0.5em minus 0.4em\relax IEEE, 2005, pp.
  1455--1466.

\bibitem{garetto2003modeling}
M.~Garetto, W.~Gong, and D.~Towsley, ``Modeling malware spreading dynamics,''
  in \emph{INFOCOM 2003. Twenty-Second Annual Joint Conference of the IEEE
  Computer and Communications. IEEE Societies}, vol.~3.\hskip 1em plus 0.5em
  minus 0.4em\relax IEEE, 2003, pp. 1869--1879.

\bibitem{guo2013epidemic}
D.~Guo, S.~Trajanovski, R.~van~de Bovenkamp, H.~Wang, and P.~Van~Mieghem,
  ``Epidemic threshold and topological structure of
  susceptible-infectious-susceptible epidemics in adaptive networks,''
  \emph{Physical Review E}, vol.~88, no.~4, p. 042802, 2013.

\bibitem{kephart1991directed}
J.~O. Kephart and S.~R. White, ``Directed-graph epidemiological models of
  computer viruses,'' in \emph{Research in Security and Privacy, 1991.
  Proceedings., 1991 IEEE Computer Society Symposium on}.\hskip 1em plus 0.5em
  minus 0.4em\relax IEEE, 1991, pp. 343--359.

\bibitem{li2012susceptible}
C.~Li, R.~van~de Bovenkamp, and P.~Van~Mieghem,
  ``Susceptible-infected-susceptible model: A comparison of n-intertwined and
  heterogeneous mean-field approximations,'' \emph{Phys. Rev. E}, vol.~86,
  no.~2, p. 026116, 2012.

\bibitem{li2013epidemic}
C.~Li, H.~Wang, and P.~Van~Mieghem, ``Epidemic threshold in directed
  networks,'' \emph{Physical Review E}, vol.~88, no.~6, p. 062802, 2013.

\bibitem{majdandzic2014spontaneous}
A.~Majdandzic, B.~Podobnik, S.~V. Buldyrev, D.~Y. Kenett, S.~Havlin, and H.~E.
  Stanley, ``Spontaneous recovery in dynamical networks,'' \emph{Nature
  Physics}, vol.~10, no.~1, pp. 34--38, 2014.

\bibitem{newman2001structure}
M.~E. Newman, ``The structure of scientific collaboration networks,''
  \emph{Proceedings of the National Academy of Sciences}, vol.~98, no.~2, pp.
  404--409, 2001.

\bibitem{pastor2014epidemic}
R.~Pastor-Satorras, C.~Castellano, P.~Van~Mieghem, and A.~Vespignani,
  ``Epidemic processes in complex networks,'' \emph{arXiv preprint
  arXiv:1408.2701}, 2014.

\bibitem{pastor2001epidemic}
R.~Pastor-Satorras and A.~Vespignani, ``Epidemic dynamics and endemic states in
  complex networks,'' \emph{Physical Review E}, vol.~63, no.~6, p. 066117,
  2001.

\bibitem{pastor2005epidemics}
------, ``Epidemics and immunization in scale-free networks,'' \emph{Handbook
  of graphs and networks: from the genome to the internet}, pp. 111--130, 2005.

\bibitem{preciado2013optimal}
V.~M. Preciado, M.~Zargham, C.~Enyioha, A.~Jadbabaie, and G.~Pappas, ``Optimal
  vaccine allocation to control epidemic outbreaks in arbitrary networks,'' in
  \emph{Decision and Control (CDC), 2013 IEEE 52nd Annual Conference on}.\hskip
  1em plus 0.5em minus 0.4em\relax IEEE, 2013, pp. 7486--7491.

\bibitem{preciado2014optimal}
V.~M. Preciado, M.~Zargham, C.~Enyioha, A.~Jadbabaie, and G.~J. Pappas,
  ``Optimal resource allocation for network protection against spreading
  processes,'' \emph{IEEE T. Contr Syst. T.}, vol.~1, no.~1, pp. 99--108, 2014.

\bibitem{qu2014heterogeneous}
B.~Qu, A.~Hanjalic, and H.~Wang, ``Heterogeneous recovery rates against sis
  epidemics in directed networks,'' \emph{arXiv preprint arXiv:1408.6959},
  2014.

\bibitem{riley2003transmission}
S.~Riley, C.~Fraser, C.~A. Donnelly, A.~C. Ghani, L.~J. Abu-Raddad, A.~J.
  Hedley, G.~M. Leung, L.-M. Ho, T.-H. Lam, T.~Q. Thach \emph{et~al.},
  ``Transmission dynamics of the etiological agent of sars in hong kong: impact
  of public health interventions,'' \emph{Science}, vol. 300, no. 5627, pp.
  1961--1966, 2003.

\bibitem{shi2008sis}
H.~Shi, Z.~Duan, and G.~Chen, ``An {SIS} model with infective medium on complex
  networks,'' \emph{Physica A}, vol. 387, no.~8, pp. 2133--2144, 2008.

\bibitem{smith2005entomological}
D.~Smith, J.~Dushoff, R.~Snow, and S.~Hay, ``The entomological inoculation rate
  and plasmodium falciparum infection in african children,'' \emph{Nature},
  vol. 438, no. 7067, pp. 492--495, 2005.

\bibitem{van2011n}
P.~Van~Mieghem, ``The {N}-intertwined {SIS} epidemic network model,''
  \emph{Computing}, vol.~93, no. 2-4, pp. 147--169, 2011.

\bibitem{van2006performance}
------, \emph{Performance analysis of communications networks and
  systems}.\hskip 1em plus 0.5em minus 0.4em\relax Cambridge University Press,
  2014.

\bibitem{van2013homogeneous}
P.~Van~Mieghem and J.~Omic, ``In-homogeneous virus spread in networks,''
  \emph{arXiv preprint arXiv:1306.2588}, 2013.

\bibitem{wang2013effect}
H.~Wang, Q.~Li, G.~D’Agostino, S.~Havlin, H.~E. Stanley, and P.~Van~Mieghem,
  ``Effect of the interconnected network structure on the epidemic threshold,''
  \emph{Physical Review E}, vol.~88, no.~2, p. 022801, 2013.

\bibitem{WANGWenBin:2143}
W.~C. H.~R. WANG~WenBin, WU~ZiNiu, ``Modelling the spreading rate of controlled
  communicable epidemics through an entropy-based thermodynamic model,''
  \emph{Sci. Sin.-Phys. Mech. Astron.}, vol.~56, no.~11, p. 2143, 2013.

\bibitem{yang2012epidemic}
Z.~Yang and T.~Zhou, ``Epidemic spreading in weighted networks: an edge-based
  mean-field solution,'' \emph{Physical Review E}, vol.~85, no.~5, p. 056106,
  2012.

\end{thebibliography}
\end{document}